\documentclass[10pt,conference]{IEEEtran}
 % inner side margin
% *** MISC UTILITY PACKAGES ***
\newif\ifpdf
\ifx\pdfoutput\undefined \else
  \ifx\pdfoutput\relax
  \else
    \ifcase\pdfoutput
    \else
      \pdftrue
    \fi
  \fi
\fi

% *** CITATION PACKAGES ***
%
\usepackage{cite}

% *** GRAPHICS RELATED PACKAGES ***
%
\ifCLASSINFOpdf
   \usepackage[pdftex]{graphicx}
   \graphicspath{{.}}
 \else
   \usepackage[dvips]{graphicx}
   \graphicspath{{.}}
 \fi

\usepackage{nicefrac}
% *** MATH PACKAGES ***
%
\usepackage{array}
\usepackage{mdwmath}
\usepackage{mdwtab}
 
\IEEEoverridecommandlockouts
\def\Note(#1){{\color{blue}(#1)}}
\usepackage{bm,amsmath,amssymb,newproof,nicefrac}
\usepackage{amsmath, amsfonts, epsfig, color, url}
\usepackage{nicefrac,times, graphicx,amsmath,amsthm,amssymb}
\usepackage[tight,footnotesize]{subfigure}
\usepackage{bm,amsmath,amssymb,newproof,nicefrac}
\usepackage{algorithmic}
\usepackage{algorithm}
\newcommand{\n}{{\cal C}}
\newcommand{\m}{{ N}}

\newcommand{\A}{{\Phi}}
\newcommand{\srm}{Delsarte-Goethals Codes~}
\newcommand{\Ex}{\mathbb{E}}
\newcommand{\nor}{\frac{1}{\sqrt{\mm}}}
\newcommand{\pii}[1]{ {P_{#1},b_{#1} } }

\newcommand{\bs}[1]{ {\boldsymbol{#1} } }
\newcommand{\g}{\Gamma_a^{\ell}}

\newtheorem{theorem}{Theorem}

\newtheorem{lemma}[theorem]{Lemma}
\theoremstyle{remark}

\newtheorem{definition}{Definition}

\newcommand{\as}{\boldsymbol{\alpha}}
\newcommand{\has}{{\hat{\as}}}

\newcommand{\mm}{N}
\newcommand{\f}{f}
\newcommand{\at}[1]{#1(x+a)\overline{#1(x)}}

\usepackage{setspace}
\usepackage{color}
 \singlespacing
\begin{document}
\title{{A Sublinear Algorithm for Sparse Reconstruction with $\boldsymbol{\ell_2 / \ell_2}$ Recovery Guarantees}}

\author{
\IEEEauthorblockN{
Robert Calderbank\thanks{The work of R. Calderbank and S. Jafarpour is supported in part by NSF under grant DMS 0701226, by ONR under grant N00173-06-1-G006, and by
AFOSR under grant FA9550-05-1-0443.
}
}
\IEEEauthorblockA{
Mathematics \& Electrical Engineering\\
Princeton University\\
 NJ 08544, USA
}
\and
\IEEEauthorblockN{Stephen Howard}
\IEEEauthorblockA{DSTO\\
PO Box 1500\\ 
Edinburgh 5111, Australia
}
\and
\IEEEauthorblockN{Sina Jafarpour}
\IEEEauthorblockA{Computer Science\\
Princeton University\\ 
 NJ 08544, USA}
 }

\maketitle
\begin{abstract}
Compressed Sensing aims to capture attributes of a sparse signal using very few measurements. Cand\`{e}s and Tao showed that sparse reconstruction is possible if the sensing matrix acts as a near isometry on all $\boldsymbol{k}$-sparse signals. This property holds with overwhelming probability if the entries of the matrix are generated by an iid Gaussian or Bernoulli process. There has been significant recent interest in an alternative signal processing framework; exploiting deterministic sensing matrices that with overwhelming probability act as a near isometry on $\boldsymbol{k}$-sparse vectors with uniformly random support, a geometric condition that is called the Statistical Restricted Isometry Property or StRIP. This paper considers a family of deterministic sensing matrices satisfying the StRIP that are based on \srm codes (binary chirps) and a $\boldsymbol{k}$-sparse reconstruction algorithm with sublinear complexity. 
{
In the presence of stochastic noise in the data domain, this paper derives bounds on the $\boldsymbol{\ell_2}$ accuracy of approximation in terms of the $\boldsymbol{\ell_2}$ norm of the measurement noise and the accuracy of the best $\boldsymbol{k}$-sparse approximation, also measured in the $\boldsymbol{\ell_2}$ norm. This type of $\boldsymbol{\ell_2 /\ell_2}$ bound is tighter than the standard $\boldsymbol{\ell_2 /\ell_1}$ or $\boldsymbol{\ell_1/ \ell_1}$ bounds.  
}
\end{abstract}
\section{Introduction}
\label{sec:intro}
The central goal of \textit{compressed sensing} is to capture attributes of a signal using very few measurements. In most work to date, this broader objective is exemplified by the important special case in which a $k$-\textit{sparse} vector $\as$ in $\mathbb{R}^{\n}$ with $\n$ large is to be reconstructed from a small number $\m$ of linear measurements with $k < \m \ll \n$. In this problem, the measurement data is a vector $f =\Phi \as$, where $\Phi$ is an $\m \times \n$ matrix called the \textit{sensing matrix}. 

The work of Donoho \cite{Donoho} and of Cand\`{e}s, Romberg and Tao \cite{CRT2} provides fundamental insight into the geometry of sensing matrices. The \textit{Restricted Isometry Property} (RIP) formulated by Cand\`{e}s and Tao \cite{CT} is that the sensing matrix acts as a near isometry on all $k$-sparse vectors, and this condition is sufficient for sparse reconstruction. There are two broad families of reconstruction algorithms, those based on convex optimization and those based on greedy iteration. The basis pursuit algorithms try to find the sparse approximation by relaxing the non-convex $\ell_0$ loss to a convex optimization task such as $\ell_1$ minimization, and LASSO \cite{CRT2}. The Matching Pursuit algorithms \cite{GT,DM,NT} on the other hand try to solve the recovery problem iteratively. At each iteration, one or a list of coordinates is selected greedily to provide the best approximation to the vector in the measurement domain. The vector in the measurement domain is then updated accordingly at the end of each iteration. Adjacency matrices of expander graphs have been shown to provide similar performance \cite{BGIST,sina,IR}.

One disadvantage of these Basis Pursuit and Matching Pursuit algorithms is that computational complexity is super-linear in the dimension of the data domain, which is typically very large if $k\ll \n$. In this paper, focusing on average case performance, we propose and analyze a Chirp Reconstruction Algorithm that reconstructs a $k$-sparse vector iteratively by forming the power spectrum of the measured superposition. By contrast the complexity of Chirp Reconstruction depends only on the sparsity level k and the number of measurements $\m$. {A second disadvantage is that even though reconstructing a $k$-sparse signal in the presence of noise in the data-domain is a fundamentally important problem, bounds on the accuracy of approximation of BP and MP algorithms are not very tight. Let $\as_k$ be $\as$ restricted to its $k$ most significant entries, $\mu$ be the noise vector, and $\has^*$ be the output of the recovery algorithm. An algorithm is said to provide $\ell_p/\ell_q$ recovery guarantees if $$\|\as-\has^*\|_p \leq C_1(k)\|\as-\as_k\|_q+C_2\|\mu\|_p.$$ The sparse reconstruction algorithms that use random dense matrices provide $\ell_2/\ell_1$ guarantees, and the expander-based reconstruction algorithms provide $\ell_1/\ell_1$ guarantees. The reason again goes to the worst-case vs stochastic modeling of the noise in the data domain. A result by Cohen \textit{et. al} \cite{best} shows that no reconstruction algorithm can provide $\ell_2/\ell_2$ reconstruction guarantees unless $\mm=\Omega(\n)$. Nevertheless, we show that if the signal consists of $k$ significant entries covered by $\n$ iid Gaussian noise, which is the case for many compressed sensing applications, it is possible to derive $\ell_2/\ell_2$ guarantees. }

Calderbank et al. \cite{strip} have considered deterministic sensing matrices that with overwhelming probability act as a near isometry on $k$-sparse vectors, and we refer to this geometric property as the \textit{Statistical} Restricted Isometry Property: \begin{definition} {\bf{(}}$\boldsymbol{(k,\epsilon,\delta)}${\bf{-StRIP matrix)}} %\\
 An $\,\mm \times \n$ (sensing) matrix $\A$ is said to be a
$(k,\epsilon,\delta)$-STRIP, if for $k$-sparse
vectors $\as \in \mathbb{R}^{\n}$, the inequalities
\begin{equation}
\mm(1-\epsilon)\,\|\as\|^2 \,\leq \,\left|\!\left|\,\A\as \, \right|\!\right|^2 \,\leq\,\mm(1+\epsilon)\,\|\as\|^2 \,,
\label{strip}
\end{equation}
hold with probability exceeding $1-\delta\,$ (with respect to a uniform distribution of the vectors
$\as$ among all $k$-sparse vectors in $\mathbb{R}^{\n}$ of the same norm).
\end{definition}
The framework includes sensing matrices for which the columns are \textit{discrete chirps} either in the standard Fourier domain \cite{LHSC} or the Walsh-Hadamard domain \cite{HSC}.  

Chirp Reconstruction is similar to Matching Pursuit in that at each iteration it identifies a significant component of the $k$-sparse signal. The overall computational complexity of Chirp Reconstruction applied to Reed Muller sensing matrices is $O (k\m \log^2 \m)$. The StRIP property of the Reed Muller sensing matrices makes it possible to accurately recover the coefficients of the $k$ significant components leading to robust recovery guarantees in the presence of noise both in the data and in the measurement domains. These guarantees apply with overwhelming probability to the class of approximately $k$-sparse signals. 

\section{\srm}
\label{sec:matrices}
Here $m$ is odd, the rows of the sensing matrix $\A$ are indexed by binary $m$-tuples $x$, and the columns are indexed by pairs $P,b$, where $P$ is an $m\times m$ binary symmetric matrix and $b$ is a binary $m$-tuple. The entry $\varphi_{P,b}(x)$ is given by 
\begin{equation}
\label{kerdock}
\varphi_{P,b}(x)=i^{wt(d_P)+2wt(b)}i^{xPx^\top+2bx^\top}
\end{equation}
where $d_p$ denotes the main diagonal of $P$, and $wt$ denotes the \textit{Hamming weight}( the number of $1$s in the binary vector). 

The Delsarte-Goethals set $DG(m,r)$ is a binary vector space containing $2^{(r+1)m}$ binary symmetric matrices with the property that the difference of any two distinct matrices has rank at least $m-2r$ (See \cite{H}). The Delsarte-Goethals sets are nested:
$$DG(m,0)\subset DG(m,1) \subset \cdots \subset DG(m,\frac{(m-1)}{2}).$$

The first set $DG(m,0)$ is the classical Kerdock set, and the last set $DG(m,\nicefrac{(m-1)}{2})$ is the set of all binary symmetric matrices. The $r${th} Delsarte-Goethals sensing matrix is determined by $DG(m,r)$ and has $\m=2^m$ rows and $\n=2^{(r+2)m}$ columns.  and the column sums in the $r^{th}$ Delsarte-Goethals sensing matrix satisfy
\begin{equation}\left| \sum_x \varphi_{P,b}(x)\right|^2= 0~\mbox{or}~\m^{2-\nicefrac{t}{m}}~\mbox{for some }t\in \{m-2r, \cdots, m\} \label{st3}.\end{equation}
We will use the following lemmas which characterize the properties of the Delsarte-Goethals matrices. For detailed proofs see \cite{strip}.
\begin{lemma}\label{group} Let ${\cal G}={\cal G}(m,r)$ be the set of column vectors $\varphi_{P,b}$ where 
$$\varphi_{P,b}(x)=i^{wt(d_P)+2wt(b)} i^{xPx^\top+2bx^\top}~,~\mbox{for }x\in \mathbb{F}_2^m$$
where $b\in \mathbb{F}_2^m$ and where the binary symmetric matrix $P$ varies over the Delsarte-Goethals set $DG(m,r)$. Then ${\cal G}$ is a group of order $2^{(r+2)m}$ under pointwise multiplication.
\end{lemma}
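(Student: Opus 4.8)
The plan is to reduce everything to one closure computation: pointwise multiplication of functions is automatically associative and commutative, the constant function $\varphi_{0,0}\equiv 1$ is the identity, and both the existence of inverses and the order count fall out cheaply once I show that the product of two members of $\mathcal{G}$ is again a member. So the real content is to prove
\[
\varphi_{P,b}\cdot\varphi_{Q,c} \;=\; \varphi_{P\oplus Q,\; b\oplus c\oplus(d_P\wedge d_Q)},
\]
where $\oplus$ and $\wedge$ denote componentwise addition and multiplication modulo two, and to observe that $P\oplus Q\in DG(m,r)$ since $DG(m,r)$ is a binary vector space, while the second subscript clearly lies in $\mathbb{F}_2^m$.

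First I would multiply the two functions and collect the exponent of $i$ modulo $4$. The main obstacle is that the integer sum $xPx^\top+xQx^\top$ is not literally $x(P\oplus Q)x^\top$: because diagonal and off-diagonal entries of the integer matrix $P+Q$ may equal $2$, carries appear. The key book-keeping, all read modulo $4$ and using $x_i^2=x_i$ for binary $x$, is the identity
\[
xPx^\top+xQx^\top \;\equiv\; x(P\oplus Q)x^\top \,+\, 2\,(d_P\wedge d_Q)\,x^\top ,
\]
so every carry collapses into a purely linear term that I can absorb into the $b$-part, giving the second subscript $b\oplus c\oplus(d_P\wedge d_Q)$. Simultaneously the normalizing phases combine through $wt(d_P)+wt(d_Q)=wt(d_P\oplus d_Q)+2\,wt(d_P\wedge d_Q)$, and a short parity check based on $wt(u\oplus v)\equiv wt(u)+wt(v)\pmod 2$ shows that the residual constant phase is $\equiv 0\pmod 4$. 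This is exactly the point at which the specific normalization $i^{wt(d_P)+2wt(b)}$ earns its keep: it is chosen so that these leftover phases cancel and the product stays in the family.

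With closure established, inverses are immediate: each $\varphi_{P,b}$ takes values in the fourth roots of unity, so $\varphi_{P,b}^{\,4}\equiv 1$ and hence $\varphi_{P,b}^{\,3}=\varphi_{P,b}^{-1}\in\mathcal{G}$. It remains to count. I would show that the parametrization $(P,b)\mapsto\varphi_{P,b}$ is injective, which by closure reduces to checking that $\varphi_{R,s}\equiv 1$ forces $R=0$ and $s=0$. Evaluating the resulting constraint $xRx^\top+2sx^\top\equiv 0\pmod 4$ at the standard basis vectors $e_i$ yields $R_{ii}+2s_i\equiv 0$, hence $d_R=0$ and $s=0$; evaluating at $e_i+e_j$ then forces each off-diagonal $R_{ij}=0$. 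Thus the map is a bijection from $DG(m,r)\times\mathbb{F}_2^m$ onto $\mathcal{G}$, and since $|DG(m,r)|=2^{(r+1)m}$ the order is $2^{(r+1)m}\cdot 2^m=2^{(r+2)m}$, as claimed.
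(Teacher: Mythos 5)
Your proof is correct, and it follows essentially the same route as the source: the paper itself states Lemma~\ref{group} without proof, deferring to \cite{strip}, where the group property is likewise established via the closure formula $\varphi_{P,b}\cdot\varphi_{Q,c}=\varphi_{P\oplus Q,\,b\oplus c\oplus(d_P\wedge d_Q)}$, with the diagonal carry term $2(d_P\wedge d_Q)x^\top$ and the normalization $i^{wt(d_P)+2wt(b)}$ playing exactly the roles you identify. Your additional injectivity check (evaluating at $x=0$, then at $e_i$ and $e_i+e_j$) correctly justifies the order count $2^{(r+1)m}\cdot 2^m=2^{(r+2)m}$.
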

The following Theorem has been proved by Calderbank et.al.
\begin{theorem}\label{maintheorem}
Suppose the $\mm \times \n$ matrix  $\A$ is derived from a $DG(m,r)$ family, and let $\eta=1-\nicefrac{2r}{m}$.
Then for any $k,\,\epsilon$ with $k \,<\,1\,+\, (\n\,-\,1)\,\epsilon\,$, $\A$ is $(k,\epsilon,\delta)$-StRIP with 
%\begin{enumerate}
%\item
$\delta\,:=\,2\exp\left[\,-\,\frac{{{[\epsilon-(k-1)/(\n-1)]}}^2\,\mm^{\eta}}{32\, k}\,\right]$.
%\item
%If $k^2 \,\geq\,\n$ and $\mm \,\geq\,c\,\frac{k^{1+\mbox{\rm{\footnotesize{o}}}(1)} \, \log\n}{\epsilon^2}$, then $\A$ is $(k,\epsilon,\delta)$-UStRIP with $\delta\,:=\,\,2\exp\left[\,-\,\frac{\epsilon^2\,\mm^{\eta}}{2\, k}\,\right]$.
%\end{enumerate}
\label{mainthm}
\end{theorem}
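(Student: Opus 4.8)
The plan is to fix the coefficient values of $\as$ (normalized so that $\|\as\|^2=1$) and to treat the support as a uniformly random $k$-subset of the $\n$ columns; a bound on the failure probability that holds for every choice of values dominates the uniform distribution over $k$-sparse vectors of fixed norm demanded by \eqref{strip}. Writing $\varphi_i$ for the columns of $\A$, the diagonal of $\A^{*}\A$ is constant and equal to $\mm$ because every entry of $\varphi_i$ is a fourth root of unity, so
\[
\|\A\as\|^2-\mm\|\as\|^2=\sum_{i\neq j}\as_i\overline{\as_j}\,\langle\varphi_i,\varphi_j\rangle=:Z,
\]
and with $\|\as\|^2=1$ the inequalities \eqref{strip} are exactly $|Z|\le\mm\epsilon$. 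The whole argument therefore reduces to a tail bound on the off-diagonal quadratic form $Z$.

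The first ingredient is a worst-case coherence bound. By Lemma \ref{group} the pointwise product $\varphi_i\overline{\varphi_j}$ is again a column of the family, so $\langle\varphi_i,\varphi_j\rangle=\sum_x(\varphi_i\overline{\varphi_j})(x)$ is a column sum to which \eqref{st3} applies. Taking the smallest admissible exponent $t=m-2r$ and using $\eta=1-\nicefrac{2r}{m}$ gives $|\langle\varphi_i,\varphi_j\rangle|^2\le \mm^{2-\eta}$, i.e.\ a normalized coherence $\mu:=\max_{i\neq j}|\langle\varphi_i,\varphi_j\rangle|/\mm\le \mm^{-\eta/2}$ for each off-diagonal pair (the alternative in \eqref{st3} being that the inner product vanishes).

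The second ingredient is the expectation $\Ex[Z]$ over the random support, which is where the offset $(k-1)/(\n-1)$ originates. Again by Lemma \ref{group}, as $j$ ranges over all columns the product $\varphi_i\overline{\varphi_j}$ ranges bijectively over the group, whence $\sum_{j}\langle\varphi_i,\varphi_j\rangle=\sum_x\sum_{\ell}\varphi_\ell(x)$ is independent of $i$; summing the phases shows that only the all-ones coordinate $x=\mathbf 1$ survives, and the surviving Gauss-type sum over the Delsarte-Goethals set is small. Hence the mean off-diagonal inner product is $O(\mm/(\n-1))$, and since $\Ex[Z]=\big(\tfrac{1}{\n(\n-1)}\sum_{i\neq j}\langle\varphi_i,\varphi_j\rangle\big)\big(|\sum_a\as_a|^2-1\big)$ with $|\sum_a\as_a|^2\le k$ by Cauchy--Schwarz, one gets $|\Ex[Z]|/\mm\le (k-1)/(\n-1)$. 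This is precisely the quantity that the hypothesis $k<1+(\n-1)\epsilon$ forces to lie strictly below $\epsilon$, so that the exponent in $\delta$ is positive.

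Finally I would reveal the $k$ support columns one at a time and run a bounded-difference (Azuma--McDiarmid) argument on the Doob martingale of $Z/\mm$. Replacing one support column alters only the $k-1$ cross terms involving it, each of magnitude at most $\mu=\mm^{-\eta/2}$ times a product of two coefficients; with $\|\as\|=1$ and $\sum_a|\as_a|\le\sqrt{k}$ the per-step difference is $O(\sqrt{k}\,\mm^{-\eta/2})$, so the sum of squared differences is $O(k\,\mm^{-\eta})$. Applying the inequality to the deviation of $Z/\mm$ beyond its mean then yields the two-sided tail $2\exp[-(\epsilon-(k-1)/(\n-1))^2\mm^{\eta}/(32k)]$, the constant $32$ being produced by the explicit per-step bounds. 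The main obstacle is the mean computation of the third step: extracting the exact offset $(k-1)/(\n-1)$ requires square-root cancellation of the exponential sum over $DG(m,r)$ (equivalently, a sharp average-coherence bound), which is the genuinely arithmetic input; by comparison the coherence estimate and the martingale concentration are routine once \eqref{st3} and Lemma \ref{group} are in hand.
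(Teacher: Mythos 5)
Your overall architecture is the right one, and it is essentially the argument behind this theorem; note that the paper itself does not prove Theorem~\ref{maintheorem} but quotes it from \cite{strip}, so the comparison is against that proof and against the same machinery the present paper deploys elsewhere. Your reduction to the off-diagonal quadratic form $Z=\sum_{i\neq j}\as_i\overline{\as_j}\langle\varphi_i,\varphi_j\rangle$ with random support and fixed unit-norm coefficients, the coherence bound $|\langle\varphi_i,\varphi_j\rangle|\le\mm^{1-\eta/2}$ obtained by combining the group property (Lemma~\ref{group}) with the column-sum property~(\ref{st3}) at the worst exponent $t=m-2r$, and the Doob-martingale/Azuma step with squared differences $O(k\,\mm^{-\eta})$ producing the constant $32$ and the two-sided prefactor $2$, all match the proof in \cite{strip} --- and they are precisely the expectation-plus-martingale template this paper itself uses for Lemma~\ref{l1} and Theorem~\ref{md2}.

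The one genuine soft spot is your third step. You claim the offset $(k-1)/(\n-1)$ ``requires square-root cancellation of the exponential sum over $DG(m,r)$.'' It does not, and indeed no square-root regime exists here. By the group structure, $\sum_{u,v}\langle\varphi_u,\varphi_v\rangle=\sum_x\bigl|\sum_u\varphi_u(x)\bigr|^2$, the squared row sums. The $b$-sum $\sum_b(-1)^{b(\mathbf{1}+x)^\top}$ annihilates every row except $x=\mathbf{1}$, where the survivor is $\mm\sum_{P}(-1)^{wt(d_P)+\sigma(P)}$, with $\sigma(P)$ the weight of the strictly upper triangle of $P$. This is a linear character summed over the $\mathbb{F}_2$-linear space $DG(m,r)$, hence all-or-nothing: it equals exactly $0$ if the parity functional $P\mapsto wt(d_P)+\sigma(P)$ is not identically even on $DG(m,r)$, and the full $2^{(r+1)m}$ otherwise. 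In the vanishing case (which is what must be verified for these frames, and is), all row sums are zero, $\sum_{u\neq v}\langle\varphi_u,\varphi_v\rangle=-\n\mm$ exactly, the mean off-diagonal inner product is exactly $-\mm/(\n-1)$, and with $|\sum_a\as_a|^2\le k$ your identity for $\Ex[Z]$ yields the offset exactly rather than as an $O(\cdot)$ estimate. In the non-vanishing case the normalized mean would be of order $1$, giving an offset of order $(k-1)/\mm$ instead of $(k-1)/(\n-1)$, so this vanishing is load-bearing and your proposal asserts rather than establishes it. Since the fix is the short parity computation above (in the spirit of the $\sum_b(-1)^{ab^\top}=0$ step already used in Lemma~\ref{exp}), I would call this an incompletely justified step inside the correct proof, not a wrong approach.
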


%\begin{theorem}\label{tab}
%Let $Q$ be a binary symmetric $m\times m$ matrix, and let $b \in \mathbb{F}_2^m$. If $$S=\sum_x i^{xQx^\top+2bx^\top}$$ then either $S=0$ or 
%$$S^2=i^{z_1Qz_1^\top+2bz_1^T} 2^{2m-t}~~,~~~\mbox{where }z_1Q=d_Q.$$
%\end{theorem} 
%\begin{corollary}
%\label{cor}
%Let $\Phii{P_1,b_1}$ and $\Phii{P_2,b_2}$ be two distinct columns of the $r$th DG sensing matrix. If $S=\sum_x \overline{\Phii{P_1,b_1}}\Phii{P_2,b_2}$  then $|S|\leq N^{\frac{1}{2}+\frac{r}{2m}}$. 
%\end{corollary}

\section{The Chirp Reconstruction Algorithm}
\label{sec:quad}

In this section we introduce the Chirp Reconstruction Algorithm, used for the purpose of efficient sparse reconstruction in the presence of noise. Let $\pi=\{\pi_1,\cdots,\pi_\n\}$ be a random permutation of $\{1,\cdots,\n\}$, and let $\as$ be an almost $k$-sparse vector whose $k$ significant entries are positioned according to $\{\pi_1,\cdots,\pi_k\}$. Let $\as_k$ be $\as$ restricted to its best $k$-term approximation. Calderbank et.al. showed that if $\A$ is $(k,\epsilon,\delta)$ StRIP, then with probability $1-\delta$, \begin{equation}\|\A(\as-\as_k)\|_2 \leq \|\as-\as_k\|_1.\label{l2l1}\end{equation} Furthermore, if we assumed that $\as$ is exactly $k$-sparse encompassed with $\n$ iid white noise with variance $\sigma_\n^2$, then since the rows of $\A$ form a tight-frame with redundancy  $\nicefrac{\n}{\mm}$, it follows that noise samples on distinct measurements are independent gaussian, with variance $\nicefrac{\n\sigma_\n^2}{\mm}$. As a result, using the concentration bounds for $\chi^2$ distribution, it follows that with overwhelming probability \begin{equation}\label{l2l2}\|\nor\A(\as-\as_k)\|_2 \leq \|\as-\as_k\|_2\end{equation}

Let $\mu$ be the noise in the measurement domain. Then compressive sensing using the matrix $\nor\A$ maps a vector $\as$ to $$f=\nor\A\as+\mu=y+\nu,$$
where $y=\nor\A\as_k$, and $\nu=\nor\A(\as-\as_k)+\mu$. The goal is then to approximate $\as_k$ from $f$. The chirp reconstruction algorithm \cite{LHSC,HSC}  is a repurposing of the chirp detection algorithm commonly used in navigation radars which is known to work extremely well in the presence of noise, and is described as Algorithm \ref{alg1}. At each iteration $t$, given the residual measurement vector $\f_t$, first the autocorrelation function is applied to $f_t$, \textit{i.e} $f_t$ is pointwise multiplied with a shifted version of itself. Then applying the fast Hadamard transform forms the power spectrum of $f_t$, which as we will show, consists of $k$ tones corresponding to the position of the $k$ significant entries of $\as$, and a noise term uniformly spread across all Hadamard coefficients, which accounts for the noise $\nu$, and chirp like cross-terms. In other words, since the sensing matrix is obtained by exponentiating quadratic functions, forming the power spectrum produces a sparse superposition of pure frequencies (in the example below, these are Walsh functions in the binary domain) against a background of chirp-like cross terms. The algorithm then iteratively learns the terms in the sparse superposition by varying the offset $a$. These terms can be peeled off in decreasing order of signal strength or processed in a list. Experimental results show close approach to the information theoretic lower bound on the required number of measurements \cite{HSC}.

\begin{algorithm}
\caption{Chirp Reconstruction Algorithm}
   Input: $\m$ dimensional vector $f^1=\nor\A\as_k+\nu$,
   Output: An approximation $\has^*$ to the $k$-sparse signal $\as_k$
     \label{alg1}
   \begin{algorithmic}[1]
  \FOR{$t=1,\cdots,k$ or while $\|\boldsymbol{f^t}\|_2 \geq \epsilon$}
   \FOR{ $j=1,\cdots,m$}
   \STATE Let $a_j$ be the $j$th standard basis vector. Using $a_j$ pointwise multiply $f_t$ with its shifted vector.
   \STATE Compute the fast Walsh-Hadamard transform of the computed auto-correlation: Equation~(\ref{fourier}).
   \STATE Find the position of the next peak $l_{t,j}$ in the Hadamard domain.Decode the next row of the $j^{th}$ row of $P_{\pi_t}$.
      \ENDFOR 
      \STATE Pointwise multiply $f^t$ with $i^{xP_{\pi_t}x^\top}$, and find the corresponding value $b_{\pi_t}$, by finding the next peak in the power spectrum.
  \STATE  \label{opt} Determine the corresponding value $\has^+_{\pi_t}$ which minimizes $\|\sqrt{N}f^t- \has_{\pi_t} \varphi_{P_{\pi_t},b_{\pi_t}}\|^2$.  
   \STATE Set ${f^{t+1}}\doteq {f^t}- \has^+_{\pi_t} \varphi_{P_{\pi_t},b_{\pi_t}}$. 
   \ENDFOR
   \STATE Let $\A_{\pi_1^k}$ be $\A$ restricted to the recovered $k$ columns. Output $\has^*\doteq \arg\min\|\nor\A_{\pi_1^k}\has-f\|^2.$
   \end{algorithmic}
   \end{algorithm}
\vskip-0.07cm
The first step is pointwise multiplication of the sparse superposition with a shifted copy of itself, which gives \begin{equation}\at{y}+\at{\nu}+y(x+a)\overline{\nu(x)}+\nu(x+a)\overline{y(x)}\label{auto} \end{equation} 
By Cauchy-Schwartz inequality and StRIP propery, it is easy to verify that the total energy of the last three terms in (\ref{auto}) is bounded by $3\|\nu\|^2\|\as_k\|^2$. The first term itself can be decomposed into pure tones $\frac{1}{\m}\sum_{j=1}^k |\alpha_j|^2(-1)^{a^\top P_{\pi_j}x}$, and chirp terms  \begin{equation}\label{chirp1}\frac{1}{\m}\sum_{i\neq j} \alpha_i\overline{\alpha_j} \varphi_{P_{\pi_i},b_{\pi_i}}(x+a)\overline{\varphi_{P_{\pi_j},b_{\pi_j}}(x)}.\end{equation}

Then the (fast) Hadamard transform concentrates the energy associated with pure tones into (at most) $k$ Walsh-Hadamard tones with energies $|\alpha_j|^4$. This algorithm may get into trouble when two of the pure tones fall into the same basis. This problem can be resolved to a large extent by varying the offset $a$ \cite{HSC}. In the next section, we show that the the the fast Hadamard transform distributes the energy of Equation~(\ref{chirp1}) uniformly across all $\m$ tones in the fast Hadamard domain. Moreover, by Azuma's inequality, it is easy to verify that the total energy of the chirps terms (Equation~(\ref{chirp1})) is with high-probability at most $\frac{2\sum_{i\neq j}|\alpha_i||\alpha_j|}{\mm^2}$. The impact of reducing the signal strength in the $k$ concentrated peaks which does not make a problem in detecting the largest peak in the presence of sufficiently large SNR.

\section{Analysis of the Algorithm}

\label{sec:anaquad}
The $l^{th}$ Fourier coefficient of the term~(\ref{chirp1}) is
\begin{equation}
\label{fourier}
\Gamma_a^l=\frac{1}{\m^{\nicefrac{3}{2}}} \sum_{j\neq t} \alpha_j \overline{\alpha_t} \sum_x (-1)^{l^\top x} \varphi_{P_{\pi_j},b_{\pi_j}}(x+a) \overline{\varphi_{P_{\pi_t},b_{\pi_t}}(x)}.
\end{equation}
In this section we show that with overwhelming probability, for all Fourier coefficients $l$, $\left|\Gamma_a^l\right|\leq {\sqrt{\frac{k}{N^\eta}}}\|\as_k\|^2$, where the probability is with respect to the permutation $\pi$. We show this by a probabilistic argument. First we show that $\mathbb{E}_\pi\left[ \left|\Gamma_a^l\right| \right]= 0$, and then by constructing an appropriate martingale sequence, and applying the Azuma's inequality we show that $\left|\Gamma_a^l\right|$ is highly concentrated around its expectation. 

 Let $\cal T$ be the set of all $k$-tuples $(t_1,\cdots,t_k)$, such that $\{t_1,\cdots,t_\n\}$ is a permutation of $\{1,\cdots,\n\}$. For all distinct $i,j$ in $\{1,\cdots,k\},$ and $(t_1,\cdots,t_k)$ in $\cal T$ define
\begin{equation}
\label{h}
h(t_i,t_j)\doteq\sum_x (-1)^{\ell x^\top} \varphi_{\pii{t_i}}(x+a) \overline {\varphi_{\pii{t_j}}(x)},
\end{equation}and
\begin{equation}
\label{g} \g(t_1,\cdots,t_k)\doteq\frac{1}{\m^{\frac{3}{2}}} \sum_{i\neq j} \alpha_i \overline{\alpha_j} h(t_i,t_j),
\end{equation}
Then (\ref{fourier}) can be written as $\g(\pi_1,\cdots,\pi_k)$. We first show that $\mathbb{E}_\pi\left[\left|\g(\pi_1,\cdots,\pi_k)\right|\right]= 0$.
\begin{lemma}
\label{l1}
Let $\cal G$ be the group of columns of $\A$ with respect to pointwise multiplication. The map ${\cal G}\times {\cal G}\rightarrow \{\pm1,\pm i\}$ given by $(\bs{g},\bs{h})\rightarrow \bs{g}(x+a)\bs{h}^{-1}(x)$ is a surjective homomorphism, and
$$\sum_{\bs{g}\neq\bs{h}}\bs{g}(x+a)\bs{h}^{-1}(x)=-\sum_{\bs{g}}\bs{g}(x+a)\bs{g}^{-1}(x).$$ 
\end{lemma}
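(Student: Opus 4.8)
The plan is to read the displayed identity off the group structure of $\cal G$, treating $x$ and the shift $a$ as fixed and viewing $\Psi(\bs{g},\bs{h}):=\bs{g}(x+a)\bs{h}^{-1}(x)$ as a map from ${\cal G}\times{\cal G}$ into the fourth roots of unity $\{\pm1,\pm i\}$. First I would check that $\Psi$ is a homomorphism. Since Lemma \ref{group} tells us that multiplication in $\cal G$ is pointwise, we have $(\bs{g}_1\bs{g}_2)(x+a)=\bs{g}_1(x+a)\,\bs{g}_2(x+a)$ and likewise $(\bs{h}_1\bs{h}_2)^{-1}(x)=\bs{h}_1^{-1}(x)\,\bs{h}_2^{-1}(x)$; because the target $\{\pm1,\pm i\}$ is abelian, the four factors regroup to give $\Psi((\bs{g}_1,\bs{h}_1)(\bs{g}_2,\bs{h}_2))=\Psi(\bs{g}_1,\bs{h}_1)\,\Psi(\bs{g}_2,\bs{h}_2)$. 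This step is routine.

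The crux is surjectivity, i.e. showing that $\Psi$ attains a primitive fourth root $\pm i$, so that its image is all of $\{\pm1,\pm i\}$ rather than a subgroup of $\{\pm1\}$. It suffices to exhibit a single column taking the value $\pm i$ at one of the two points: taking $\bs{h}=\varphi_{0,0}=\boldsymbol{1}$ reduces the question to whether some $\bs{g}=\varphi_{P,b}$ has $\bs{g}(x+a)\in\{\pm i\}$, i.e. whether the exponent in (\ref{kerdock}) is odd. Using $z^2=z$ over $\mathbb{F}_2$ and the symmetry of $P$, the off-diagonal part of $(x+a)P(x+a)^\top$ is even, so this exponent is congruent mod $2$ to $wt(d_P)+d_P\cdot(x+a)^\top=d_P\cdot(\boldsymbol{1}+x+a)^\top$, the additive $2wt(b)$ and $2b(x+a)^\top$ terms being even. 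This parity can be forced to equal $1$ unless every diagonal $d_P$, $P\in DG(m,r)$, is orthogonal to $\boldsymbol{1}+x+a$; in the exceptional case $\boldsymbol{1}+x+a=0$ one instead evaluates $\bs{h}$ at $x$, where the analogous parity is $d_Q\cdot(\boldsymbol{1}+x)^\top=d_Q\cdot a^\top$, which is nonzero for a suitable $Q$ whenever $a\neq0$. Thus, since the diagonals of $DG(m,r)$ span $\mathbb{F}_2^m$ and $a\neq0$, the image contains $\pm i$. I expect this verification, the only place the Delsarte--Goethals structure enters, to be the main obstacle.

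Finally I would exploit surjectivity to evaluate the total sum. A surjective homomorphism from the group ${\cal G}\times{\cal G}$ of order $\n^2$ onto the cyclic group $\{\pm1,\pm i\}$ has all four fibers of equal size $\n^2/4$, so $\sum_{\bs{g},\bs{h}}\Psi(\bs{g},\bs{h})=\frac{\n^2}{4}\,(1+i-1-i)=0$. Separating the diagonal $\bs{g}=\bs{h}$ from the off-diagonal terms then gives $\sum_{\bs{g}\neq\bs{h}}\bs{g}(x+a)\bs{h}^{-1}(x)=-\sum_{\bs{g}}\bs{g}(x+a)\bs{g}^{-1}(x)$, the asserted identity. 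As a robustness check, the same conclusion follows from the factorization $\sum_{\bs{g},\bs{h}}\Psi(\bs{g},\bs{h})=\big(\sum_{\bs{g}}\bs{g}(x+a)\big)\big(\sum_{\bs{h}}\bs{h}^{-1}(x)\big)$: each factor is the sum of an evaluation homomorphism over $\cal G$ and hence vanishes unless that homomorphism is trivial, and the first-order columns $\varphi_{0,b}$ already make it nontrivial at any point other than $\boldsymbol{1}$. Since $a\neq0$ forces at least one of $x,x+a$ to differ from $\boldsymbol{1}$, one factor is zero; this shows the identity itself needs only that one evaluation map be nonconstant, slightly less than the full surjectivity asserted in the statement.
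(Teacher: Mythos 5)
Your proposal is correct and takes essentially the same route as the paper, whose entire proof of this lemma is the single line $\sum_{\bs{g},\bs{h}}\bs{g}(x+a)\bs{h}^{-1}(x)=0$ --- precisely your concluding full-sum-minus-diagonal step, justified by character orthogonality over the column group. The additional material you supply (the homomorphism check, the factorization fallback, and surjectivity via the diagonals of $DG(m,r)$ --- an assertion which is indeed true, since the Kerdock subset $DG(m,0)$ has nonzero elements of odd rank $m$, whereas a binary symmetric matrix with zero diagonal is alternating and has even rank, so $P\mapsto d_P$ is a bijection onto $\mathbb{F}_2^m$) simply fills in details the paper states but leaves unproven.
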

\begin{proof}
$\sum_{\bs{g},\bs{h}}\bs{g}(x+a)\bs{h}^{-1}(x)=0$.
\end{proof}
\begin{lemma}\label{exp}
$\Ex_\pi\left[ \g(\pi)\right]$ is zero. %either zero or decays to zero at the rate $\frac{\m}{\n}$.
\end{lemma}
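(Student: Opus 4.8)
The plan is to push the expectation inside the sum defining $\g$ in~(\ref{g}) and reduce the whole computation to a single character sum over the group ${\cal G}$ that Lemma~\ref{l1} already controls. By linearity of expectation,
\[
\Ex_\pi[\g(\pi)]=\frac{1}{\m^{3/2}}\sum_{i\neq j}\alpha_i\overline{\alpha_j}\,\Ex_\pi\!\left[h(\pi_i,\pi_j)\right],
\]
so it is enough to prove that $\Ex_\pi[h(\pi_i,\pi_j)]=0$ for each fixed pair of distinct indices $i,j\in\{1,\dots,k\}$.

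First I would use that $\pi$ is a uniformly random permutation: for distinct $i,j$ the ordered pair $(\pi_i,\pi_j)$ is uniform over all ordered pairs of distinct column indices, and by Lemma~\ref{group} these columns are precisely the $\n$ distinct elements of the group ${\cal G}$. Writing $h$ in group notation as $h(\bs g,\bs h)=\sum_x(-1)^{\ell x^\top}\bs g(x+a)\bs h^{-1}(x)$ (using $\overline{\bs h(x)}=\bs h^{-1}(x)$), this gives
\[
\Ex_\pi[h(\pi_i,\pi_j)]=\frac{1}{\n(\n-1)}\sum_{\bs g\neq\bs h}h(\bs g,\bs h).
\]
Interchanging the finite sums over $x$ and over $\bs g\neq\bs h$ and applying Lemma~\ref{l1} at each fixed $x$ then yields
\[
\sum_{\bs g\neq\bs h}h(\bs g,\bs h)=-\sum_x(-1)^{\ell x^\top}\sum_{\bs g}\bs g(x+a)\bs g^{-1}(x).
\]

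The final step is to show that the diagonal sum $\sum_{\bs g}\bs g(x+a)\bs g^{-1}(x)$ vanishes for every $x$, using that the offset $a=a_j$ is a nonzero standard basis vector. For fixed $x,a$ the map $\bs g\mapsto\bs g(x+a)\bs g^{-1}(x)$ is a homomorphism from ${\cal G}$ into the fourth roots of unity $\{\pm1,\pm i\}$, so by character orthogonality its sum over ${\cal G}$ is either $|{\cal G}|$ or $0$ according as the homomorphism is trivial or not; evaluating it on $\bs g=\varphi_{0,b}$ gives $(-1)^{ba^\top}$, which equals $-1$ for a suitable $b$ precisely because $a\neq 0$, so the homomorphism is nontrivial and the sum is $0$. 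This forces $\sum_{\bs g\neq\bs h}h=0$, hence $\Ex_\pi[h(\pi_i,\pi_j)]=0$ and $\Ex_\pi[\g(\pi)]=0$. I expect the crux to be exactly this last step: Lemma~\ref{l1} only trades the off-diagonal sum for $-1$ times the diagonal sum rather than killing it outright, so the argument is not complete until one separately checks that the diagonal character sum vanishes, and that is the single place where the hypothesis $a\neq 0$ is indispensable.
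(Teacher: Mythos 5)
Your proposal is correct and follows essentially the same route as the paper: reduce via linearity and uniformity of $(\pi_i,\pi_j)$ to the off-diagonal group sum, apply Lemma~\ref{l1} to trade it for the diagonal sum, and then kill the diagonal using $a\neq 0$. The only cosmetic difference is in the last step, where you invoke character orthogonality of the nontrivial homomorphism $\bs{g}\mapsto\bs{g}(x+a)\bs{g}^{-1}(x)$ (witnessed at $\varphi_{0,b}$), while the paper expands the diagonal sum explicitly into $\sum_P i^{aPa^\top}\sum_x(-1)^{(aP+\ell)x^\top}\sum_b(-1)^{ab^\top}$ and observes $\sum_b(-1)^{ab^\top}=0$ --- which is exactly the same orthogonality fact.
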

\begin{proof}
We can rewrite $$\Ex_{{\substack{\pi\\i \neq j}}}[\sum_x (-1)^{\ell x^\top} \varphi_{\pii{\pi_i}}(x+a) \overline {\varphi_{\pii{\pi_j}}(x)}]$$ in the form
\begin{equation}\label{form}\frac{1}{\n(\n-1)}\sum_x (-1)^{\ell x^\top} \sum_{\bs{g}\neq \bs{h}} \bs{g}(x+a)\bs{h}^{-1}(x).\end{equation}
The initial factor is just the frequency with which any admissible pair is chosen, and the second sum is taken over the column group $\cal G$. Lemma~\ref{l1} allows us to rewrite (\ref{form}) as 
\begin{eqnarray}\nonumber
\label{form2}& &\frac{-1}{\n(\n-1)} \sum_x (-1)^{\ell x^\top} \sum_{\bs{g}} \bs{g}(x+a)\bs{g}^{-1}(x)\\ &=&\frac{-1}{\n(\n-1)} \sum_P i^{aPa^\top}  \sum_x (-1)^{(aP+\ell) x^\top}  \sum_b (-1)^{ab^\top},
\end{eqnarray}
where the outer sum is taken over all binary symmetric matrices in the \srm ensembles. Since $a\neq 0$, the sum $\sum_b (-1)^{ab^\top}=0$ is always zero%, and if $a=0$ then (\ref{form2}) reduces to 
%$$ \frac{-1}{\n-1} \sum_x (-1)^{\ell x^\top}=
%\left\{
%\begin{array}{cc}
 %\frac{-\m}{\n-1} & \mbox{if $\ell=0$}    \\
 % 0& \mbox{otherwise}      
%\end{array}
%\right.
 % $$
\end{proof}
\begin{theorem}\label{md2}
Let $\pi$ be a random permutation of $\{1,\cdots,\n\}$. Then with probability at least $1-\delta$ for any coefficient $l$ we have 
\begin{equation}
 \g(\pi_1,\cdots,\pi_k)\leq \sqrt{\frac{8k\log\left(\frac{\m}{\delta}\right)}{\m^{1-r/m}}}\|\bs{\alpha}\|^2.
\end{equation}
\end{theorem}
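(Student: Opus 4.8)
The plan is to combine the vanishing of the mean, established in Lemma~\ref{exp}, with a martingale concentration argument in the random permutation $\pi$, closed off by a union bound over the $\m$ Fourier coefficients $\ell$. Since $\g(\pi_1,\dots,\pi_k)$ is complex, I would run the argument on its real and imaginary parts separately (each a real-valued function of $\pi$ with mean zero by Lemma~\ref{exp}) and add the two tail bounds at the end; for brevity I describe a single real component. Exposing the images of the permutation one at a time, define the Doob martingale $M_s\doteq\Ex_\pi[\,\g \mid \pi_1,\dots,\pi_s\,]$ for $s=0,\dots,k$, so that $M_0=\Ex_\pi[\g]=0$ by Lemma~\ref{exp} and $M_k=\g(\pi_1,\dots,\pi_k)$. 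Concentration of $M_k$ about $M_0$ is then exactly concentration of $\g$ about $0$.

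The heart of the proof is a uniform bound on the increments $|M_s-M_{s-1}|$, and the first ingredient is control of the inner character sum $h(t_i,t_j)$ of~(\ref{h}). By the group property (Lemma~\ref{group}) the product $\varphi_{\pii{t_i}}(x+a)\overline{\varphi_{\pii{t_j}}(x)}$ is, up to a fourth root of unity, again a column of $\cal G$; absorbing the linear character $(-1)^{\ell x^\top}$ into its $b$-component keeps it in $\cal G$, indexed by the difference matrix $P_{t_i}-P_{t_j}\in DG(m,r)$. Hence $h(t_i,t_j)$ is a column sum of a Delsarte--Goethals matrix, and~(\ref{st3}) supplies the modulus bound governing each increment. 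Since revealing $\pi_s$ only disturbs the terms of~(\ref{g}) with $i=s$ or $j=s$, I would bound the resulting change by $c_s$, a multiple of $|\alpha_s|$ times $\|\bs{\alpha}\|$ (a Cauchy--Schwarz passage from $\|\bs{\alpha}\|_1$ to $\sqrt{k}\,\|\bs{\alpha}\|$ over the at most $k$ affected terms) times the relevant power of $\m$ coming from~(\ref{st3}). Summing squares then gives a bound of the form $\sum_{s=1}^k c_s^2 \le k\,\|\bs{\alpha}\|^4/\m^{\,1-r/m}$.

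With the increments in hand I would apply the Azuma--Hoeffding inequality to the real martingale $(M_s)_{s=0}^k$, obtaining $\Pr_\pi\!\big[\,|\g|\ge\lambda\,\big]\le 2\exp\!\big(-\lambda^2/(2\sum_s c_s^2)\big)$ for a fixed coefficient $\ell$. A union bound over the $\m$ values of $\ell$ replaces $\delta$ by $\delta/\m$ and manufactures the logarithmic factor; solving $2\m\exp\!\big(-\lambda^2/(2\sum_s c_s^2)\big)=\delta$ for $\lambda$ and inserting the increment bound yields the advertised estimate $\sqrt{8k\log(\m/\delta)/\m^{\,1-r/m}}\;\|\bs{\alpha}\|^2$.

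The main obstacle is the increment estimate, and it carries two delicate points. The first is the modulus bound for $h$: one must verify that the product of two columns, twisted by the linear character, genuinely lands back in $\cal G$ so that~(\ref{st3}) applies, and in particular handle the degenerate case $P_{t_i}=P_{t_j}$ (difference matrix of rank $0$), where the Gauss sum collapses to a full $\pm\m$ spike—this is precisely the ``collision'' of tones flagged for the peak-detection step, and it must either be excluded or shown to enter at the same order. The second is the permutation's dependence structure: because fixing $\pi_s$ constrains the remaining images, I must confirm that each Doob increment is truly controlled by only the $O(k)$ terms touching index $s$, so that the squared increments aggregate to $\|\bs{\alpha}\|^4$ rather than to $\|\bs{\alpha}\|_1^2\,\|\bs{\alpha}\|^2$. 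It is these two estimates that pin down the exact power $\m^{\,1-r/m}$ in the denominator and the constant $8$, and they are where the real work lies.
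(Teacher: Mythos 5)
Your proposal follows essentially the same route as the paper's proof: the Doob martingale $Z_i=\Ex_\pi\left[\g(\pi_1,\cdots,\pi_k)\mid\pi_1,\cdots,\pi_i\right]$ with zero mean from Lemma~\ref{exp}, increments controlled through the column-group structure and the Gauss-sum bound~(\ref{st3}), the Cauchy--Schwarz aggregation of squared increments to $k\|\bs{\alpha}\|^4$ over the relevant power of $\m$, Azuma's inequality, and a union bound over the $\m$ Hadamard tones. If anything you are more careful than the paper, which applies~(\ref{st3}) without addressing your flagged degenerate case $P_{t_i}=P_{t_j}$ (where the cross-term is a pure Walsh tone of magnitude $\m$ rather than a chirp) and does not comment on splitting the complex-valued $\g$ into real and imaginary parts before invoking Azuma.
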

\begin{IEEEproof}
Define the martingale sequence $Z_1,\cdots,Z_k$ as \begin{equation}\label{martingale}Z_i=\mathbb{E}_\pi\left[ \g(\pi_1,\cdots,\pi_k)\left|\right.\pi_1,\cdots,\pi_i\right],\end{equation}   
and denote $\pi_i^j\doteq(\pi_i,\cdots,\pi_j)$. Since the columns of $\A$ form a group under pointwise multiplication, using Equation~(\ref{st3}) we get
\begin{eqnarray}
\nonumber
&&\left|\sup_u \mathbb{E}_\pi\left[ \g(\pi_1^k)\left|\right.\pi_1^{i-1},u\right]-\inf_l \mathbb{E}_\pi\left[ \g(\pi_1^k)\left|\right.\pi_1^{i-1},l\right]\right| \\ \label{marting2}&\leq& \frac{|\alpha_i| |\sum_{j\neq i}\alpha_j|}{\m^{\frac{m-r}{m}}}.
\end{eqnarray}
Note that by Cauchy-Schwartz inequality
$$\sum_i \left(|\alpha_i| |\sum_{j\neq i}\alpha_j|^2 \right)\leq k\left(\sum_i |\alpha_i|^2\right)^2.$$
Consequently, by applying Azuma's inequality we get 
$$\Pr_\pi\left[  \g(\pi_1,\cdots,\pi_k)\geq \epsilon  \right] \leq \exp\left(\frac{-\m^{1-r/m}\epsilon^2}{8k \|\as\|_2^4}\right).
$$
Applying the union bounds on all $\m$ possible choices of $l$ completes the proof. 
\end{IEEEproof}

\label{sec:ell}
Consequently, the chirp-like terms have uniform distribution across all $N$ tones in the fast hadamard domain. Consequently, if $k\ll \n$, and the SNR is sufficiently large, it is possible to iteratively recover the positions of the $k$ significant entries of the vector $\as$. Having recovered the support $\pi_1^k$ of $\as_k$, it is possible to reconstruct a better approximation for $\as_k$ by minimizing $\|\nor\A_{pi_1^k}\hat-\f\|^2$, which has the analytical solution \begin{equation}\has^*\doteq\sqrt{\mm}\left(\A_{\pi_1^k}^\dag\A_{\pi_1^k}\right)^{-1}\A_{\pi_1^k}^\dag f.\label{el2}\end{equation} The following bound on the approximation error of $\has^*$ then follows from the StRIP property.
\begin{theorem}
Let $\A$ be $(k,\epsilon,\delta)$-StRIP. Let $\as$ be an almost $k$-sparse vector such that $\as_k$ has a uniformly random support $\{\pi_1,\cdots,\pi_k\}$. Let $\has^*$ defined by Equation~(\ref{el2}). Then with probability $1-\delta$, 
$$\|\has^*-\as_k\|_2\leq \frac{2}{(1-\epsilon)}\left(\nor\|\A(\as-\as_k)\|_2+\|\mu\|_2\right).$$
\end{theorem}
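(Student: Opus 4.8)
The plan is to treat $\has^*$ as the solution of the least-squares problem in Step~\ref{opt} and to exploit the fact that, when restricted to the recovered support, the matrix $\nor\A_{\pi_1^k}$ behaves as a near-isometry. Write $\Psi\doteq\nor\A_{\pi_1^k}$, regard $\as_k$ as the $k$-dimensional coefficient vector carried on the support $\{\pi_1,\cdots,\pi_k\}$, and set $\nu\doteq\nor\A(\as-\as_k)+\mu$, so that $f=\Psi\as_k+\nu$. With this notation the $\has^*$ of Equation~(\ref{el2}) is exactly $\arg\min_{\has}\|\Psi\has-f\|_2$.

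First I would record the isometry facts I need. Because the support $\{\pi_1,\cdots,\pi_k\}$ is uniformly random, the StRIP hypothesis applies to $\A_{\pi_1^k}$: with probability at least $1-\delta$ the inequalities~(\ref{strip}) hold for every vector carried on this support, which is equivalent to saying that all singular values of $\Psi$ lie in $[\sqrt{1-\epsilon},\sqrt{1+\epsilon}]$. In particular $\sqrt{1-\epsilon}>0$ guarantees that $\Psi$ has full column rank, so $\A_{\pi_1^k}^\dag\A_{\pi_1^k}$ is invertible and the closed form~(\ref{el2}) is well defined.

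The core of the argument is a two-sided comparison in the measurement domain. Since $\as_k$ is a feasible point of the minimization and $\has^*$ is a minimizer, $\|\Psi\has^*-f\|_2\leq\|\Psi\as_k-f\|_2=\|\nu\|_2$; combining this with the triangle inequality gives $\|\Psi(\has^*-\as_k)\|_2\leq\|\Psi\has^*-f\|_2+\|f-\Psi\as_k\|_2\leq 2\|\nu\|_2$. Now $\has^*-\as_k$ is again carried on the support $\{\pi_1,\cdots,\pi_k\}$, so the lower isometry bound applies to it and yields $\sqrt{1-\epsilon}\,\|\has^*-\as_k\|_2\leq\|\Psi(\has^*-\as_k)\|_2$. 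Chaining the two estimates produces $\|\has^*-\as_k\|_2\leq\frac{2}{\sqrt{1-\epsilon}}\|\nu\|_2\leq\frac{2}{1-\epsilon}\|\nu\|_2$, where the last step uses $\sqrt{1-\epsilon}\geq 1-\epsilon$ for $\epsilon\in[0,1]$. A final application of the triangle inequality to $\nu=\nor\A(\as-\as_k)+\mu$ bounds $\|\nu\|_2$ by $\nor\|\A(\as-\as_k)\|_2+\|\mu\|_2$ and completes the claim.

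The one delicate point is the correct use of the StRIP in the lower bound: the vector $\has^*-\as_k$ is data dependent, so I cannot invoke a per-vector probabilistic isometry and must instead read the StRIP as the statement that, with probability $1-\delta$ over the random support, the submatrix $\nor\A_{\pi_1^k}$ is simultaneously a near-isometry on \emph{every} vector supported on $\{\pi_1,\cdots,\pi_k\}$. This uniform (smallest-singular-value) reading is precisely what lets the least-squares error bound apply to the particular, noise-dependent error vector, and it is where the entire $1-\delta$ probability budget is spent.
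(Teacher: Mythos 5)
Your proposal is correct and follows essentially the same route as the paper's own proof: compare $\has^*$ against the feasible point $\as_k$ to get $\|\nor\A\has^*-f\|_2\leq\|\nu\|_2$, apply the triangle inequality to bound $\nor\|\A(\has^*-\as_k)\|_2\leq 2\|\nu\|_2$, invoke the StRIP lower bound on the difference vector supported on $\{\pi_1,\cdots,\pi_k\}$, and finish by splitting $\nu$. You are in fact slightly more careful than the paper on two points it glosses over --- making explicit the relaxation $\sqrt{1-\epsilon}\geq 1-\epsilon$ (the paper writes the constant $1-\epsilon$ directly in its isometry step) and noting that applying StRIP to the data-dependent vector $\has^*-\as_k$ requires reading it as a uniform near-isometry over all vectors on the random support, which is exactly how the paper implicitly uses it.
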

\begin{proof}
Since $\A$ is $(k,\epsilon,\delta)$-StRIP, and $\as_k$ and $\has^*$ are two $k$-sparse vectors with the same random support, with probability $1-\delta,~(1-\epsilon)\|\has^*-\as_k\|_2\leq\nor\|\A(\has^*-\as_k)\|_2$. By the triangle inequality
$$\nor\|\A(\has^*-\as_k)\|_2\leq \|\nor\A\has^*-\f\|_2+\|\nu\|_2.$$ 
On the other hand, by definition of $\has^*$ we have
$$ \|\nor\A\has^*-f\|_2\leq \|\nor\A\as_k-f\|_2\leq \|\nu\|_2.$$
Putting all together, and recalling that $$\|\nu\|_2\leq \nor\|\A(\as-\as_k)\|_2+\|\mu\|_2$$
Completes the proof.\end{proof}
As a result, it follows from Equation~(\ref{l2l1}), that with probability at least $1-2\delta$,
 $$\|\has^*-\as_k\|_2\leq \frac{2}{(1-\epsilon)}\left(\nor\|\as-\as_k\|_1+\|\mu\|_2\right),$$ and furthermore, considering Equation~(\ref{l2l2}), if the signal in the data domain consists of $k$-significant entries covered by white noise, then with overwhelming probability
 $$\|\has^*-\as_k\|_2\leq \frac{2}{(1-\epsilon)}\left(\|\as-\as_k\|_2+\|\mu\|_2\right).$$

\bibliographystyle{IEEEbib} 
\bibliography{jxhc} 
\end{document}